%% file: main_arxiv.tex
\documentclass[a4paper,11pt]{article}

\usepackage[T1]{fontenc}
\usepackage[utf8]{inputenc}
\usepackage[UKenglish]{babel}
\usepackage{amsmath,amssymb,amsthm}
\usepackage[ruled,vlined,linesnumbered]{algorithm2e}
\SetArgSty{textnormal}
\usepackage{tikz}
\usetikzlibrary{positioning,shadows}
\usepackage{hyperref}
\usepackage[nameinlink,noabbrev]{cleveref}
\usepackage{xspace}
\usepackage{xcolor}
\usepackage[margin=1in]{geometry}
\usepackage{microtype}

\newtheorem{theorem}{Theorem}[section]
\newtheorem{lemma}[theorem]{Lemma}

\input{macros.tex}

\title{Polylogarithmic Approximation for Covering and Connecting Multi-Interface Networks}

\author{Micha\l{} Szyfelbein\thanks{Gda\'nsk University of Technology, Poland. \texttt{michal.szyfelbein@pg.edu.pl}. \href{https://orcid.org/0009-0009-9894-9671}{ORCID: 0009-0009-9894-9671}}
\and
Camille Richer\thanks{Universit\'e Paris-Dauphine, PSL Research University, CNRS, UMR 7243, LAMSADE. \texttt{camille.richer@dauphine.eu}. \href{https://orcid.org/0009-0000-3636-6571}{ORCID: 0009-0000-3636-6571}}}

\date{}

\begin{document}

\maketitle

\input{sections/abstract.tex}

\input{sections/introduction.tex}
\input{sections/preliminaries.tex}
\input{sections/lp-based-approximation-algorithms.tex}

\input{sections/conclusion.tex}
\bibliography{lipics-v2021-sample-article}

\appendix
\input{sections/appendix.tex}

\end{document}

%% file: macros.tex
\tikzstyle{mybox} = [draw=black, fill=white, drop shadow, very thick, rectangle, rounded corners, inner sep=10pt, inner ysep=12pt]
\tikzstyle{fancytitle} =[fill=white, text=black]
\newcommand{\pbdef}[3]{
\noindent
\begin{tikzpicture}
\node [mybox] (box){%
    \begin{minipage}{0.94\textwidth}
        \textit{Input: }{#2}
        \\
        \textit{Task: }{#3}
    \end{minipage}
};
\node[fancytitle, right=10pt] at (box.north west) {#1};
\end{tikzpicture}
}

\newcommand{\coverage}{\textsc{Coverage}\xspace}
\newcommand{\connectivity}{\textsc{Connectivity}\xspace}
\newcommand{\setcover}{\textsc{Set Cover}\xspace}
\newcommand{\mst}{\textsc{Minimum Spanning Tree}\xspace}

\newcommand{\br}[1]{\mathopen{}\left( #1 \right)}
\newcommand{\sbr}[1]{\mathopen{}\left[ #1 \right]}
\newcommand{\brc}[1]{\mathopen{}\left\{ #1 \right\}}
\newcommand{\spr}[1]{\mathopen{}\left| #1 \right|}

\newcommand{\cl}[1]{\mathopen{}\left\lceil #1 \right\rceil}

\newcommand{\e}[1]{\exp\left\{ #1 \right\}}

\newcommand{\NPhard}{\textnormal{NP-Hard}\xspace}

\newcommand{\OPT}{\textnormal{OPT}}
\newcommand{\COST}{\textnormal{cost}}

\newcommand{\cov}{\operatorname{cov}}
\newcommand{\E}[1]{\mathbb{E}\!\left[#1\right]}

%% file: sections/abstract.tex
\begin{abstract}
We study problems related to connecting multi-interface networks of wireless devices. These problems can be modeled using graphs, where vertices represent the devices and edges represent potential communication  links. Each vertex can activate multiple interfaces, and a connection between two vertices is established if they share at least one common active interface.
However, activating an interface induces a cost that depends both on the type of the interface and on the vertex that activates it. We consider two problems arising in multi-interface networks: \coverage and \connectivity. In the \coverage problem, every connection defined in the network must be established, while in the \connectivity problem, it is only required that the established connections form a subgraph spanning the network. The solution should also minimize either the maximum cost incurred by a node or the total cost incurred by all vertices. In this work we are interested in approximating the former of the two cost criterions.

We model both problems using Integer Linear Programming (ILP) and we design approximation algorithms based on a randomized rounding of the solution of the linear programming relaxation. For the \coverage problem, this yields an $O(\log n)$-approximation algorithm, where $n$ is the number of vertices. This result is tight, since the problem generalizes \setcover. This improves upon the $O(b\cdot\log n)$-approximation algorithm, where $b$ is a certain graph parameter which can be as large as $\Omega(n)$ [Algorithmica '12]. The same relaxation can also be used to get a $k$-approximation algorithm, where $k$ is the number of different interfaces. This generalizes a similar result for the homogeneous cost case, where the cost of an interface is the same for all vertices.
The main result of our work is an $O(\log^2 n)$-approximation algorithm for the \connectivity, which is the first non-trivial approximation for this problem. The algorithm is based on a similar LP relaxation with additional cut constraints to ensure connectivity. The rounding procedure resembles the one for the \coverage but requires a more careful analysis to ensure that the connectivity constraints are satisfied.
\end{abstract}

%% file: sections/introduction.tex
\section{Introduction}
Designing Internet of Things (IoT) networks often demands establishing connections between large number of devices, which can be achieved by activating communication interfaces at each of them. However, the devices in the network may vary according to the types of interfaces they have at their availability and the amount of resources a given interface consumes. For example some device may be able to utilize Bluetooth and Wi-Fi connections, while the others may provide Zigbee and Z-Wave interfaces. Since IoT devices are typically resource-constrained, it may be desirable to try to establish every relevant connection, while minimizing the amount of resources required to do so. 

\begin{figure}[t]
\centering
\input{figures/combined-illustration.tex}
\caption{From left to right: a common input graph, a covering assignment, and a connecting assignment for the same instance. An edge is colored according to an interface that covers it.}
\label{fig:example_problems}
\end{figure}
We model the above problem using a graph. Every device in the network is represented by a vertex with a set of labels encoding its available interfaces. Two devices can communicate if they are connected by an edge, representing that they are within communication range. A connection is established between two adjacent devices whenever they share a common active interface. 
We want to find an assignment of active interfaces to vertices so that we ensure either the \coverage or \connectivity property. 

In the \coverage problem, we wish to establish every possible direct connection in the network which means that both endpoints of every edge must share a common active interface (see the middle panel in \Cref{fig:example_problems} for an example) while in the \connectivity problem, we require all vertices to be (not necessarily directly) connected in the subgraph induced by active interfaces (see the right panel in \Cref{fig:example_problems} for an example). In both problems, we assume that activating an interface induces a cost that depends on the type of interface and on the vertex that activates it. The goal is to minimize the max cost, i.~e., the maximum cost of  interfaces activated at a given vertex across all vertices. 

\subsection{Related Work}

In the following, let $k$ denote the number interfaces types across the network, $n$ the number of vertices of the graph, $m$ the number of edges, $\Delta$ the maximum degree, $c_{\max}$ the maximum cost and $c_{\min}$ the minimum cost of an interface. In the \emph{homogeneous} setting, the costs only depend on the interface while in the \emph{heterogeneous} case they also depend on the vertex. 

\subparagraph*{Coverage Problem.}
The algorithmic study of problems in multi-interface networks started with the problem of covering a multi-interface network, introduced in Caporuscio et al. \cite{EnergeticPerformanceOfServiceOrientedMultiRadioNetworks}. 
This problem has been mainly studied in the homogeneous setting, both for min-sum \cite{CostMinimizationInWirelessNetworks} and min-max \cite{MinimizeTheMaximumDutyInMultiInterfaceNetworks} objective functions.
The min-sum variant is APX-hard for any constant $k \geq 3$ and unit costs, and it is hard to approximate within an $o\br{\ln k}$ factor even with unit costs, but it has a $\min \brc{k-1, \sqrt{n} \br{1+\ln n}}$-approximation \cite{CostMinimizationInWirelessNetworks}.
The min-max variant is \NPhard for any constant $\Delta \geq 5$, $k \geq 16$ and for unit costs, and it is not approximable within a $\eta \ln \Delta$ factor for some constant $\eta$ even on trees and with unit costs. On the positive side, using an approximation of \setcover, it admits a $\br{\ln \Delta +1 +b \cdot \min \brc{c_{\max}, \ln \Delta +1}}$-approximation, where $b$ is a parameter that depends on the structural properties of the input graph and is bounded by many other structutal parameters such as treewidth, arboricity or maximum degree. It also has simple $(1+\br{k-2} \frac{c_{\max}}{2 c_{\min}})$- and $(\frac{\Delta}{2} \frac{c_{\max}}{c_{\min}})$-approximation \cite{MinimizeTheMaximumDutyInMultiInterfaceNetworks}. 

\subparagraph*{Connectivity Problem.} The problem of connecting a multi-interface network was introduced by Kosowski et al. \cite{ExploitingMultiInterfaceNetworksConnectivityAndCheapestPaths}. It was first studied in the homogeneous setting, both for the min-sum \cite{ExploitingMultiInterfaceNetworksConnectivityAndCheapestPaths} and the min-max objective functions \cite{MinimizeTheMaximumDutyInMultiInterfaceNetworks}.
The min-sum variant with homogeneous unit costs is APX-hard for $k \geq 2$ and $\Delta \geq 4$ already, but it has a $2$-approximation \cite{ExploitingMultiInterfaceNetworksConnectivityAndCheapestPaths}. This result was improved by Athanassopoulos et al. \cite{EnergyEfficientCommunicationInMultiInterfaceWirelessNetworks} to a randomized $(\frac{3}{2} + \epsilon)$-approximation for any constant $\epsilon$, using a reduction to a \mst problem on hypergraphs. Moreover, this approximation algorithm applies to the more general problem where for each interface, the input specifies what set of edges it can cover.

The min-max variant with homogeneous unit costs is \NPhard for any constant $\Delta \geq 3$ and $k \geq 10$. Similarly to the coverage, it is not approximable within a $\eta \ln n$ factor even for unit costs, but has simple $(1+(k-2) \frac{c_{\max}}{2 c_{\min}})$- and $\frac{\Delta}{2} \frac{c_{\max}}{c_{\min}}$-approximations \cite{MinimizeTheMaximumDutyInMultiInterfaceNetworks}.

Athanassopoulos et al. \cite{EnergyEfficientCommunicationInMultiInterfaceWirelessNetworks} expand the study of the min-sum variant to group connectivity, where only certain vertices, called \textit{terminals} need to be spanned and there may be several terminal groups. They provide a randomized $4$-approximation for this case. They also initiate the study of this problem in the heterogeneous setting. They show that for heterogeneous costs, the min-sum variant is hard to approximate within $o(\ln n)$ even when all vertices are terminals, but the min-sum group connectivity variant admits a $O(\ln n)$-approximation.

\subparagraph*{Other Problems.} There has been some  study of the coverage problem from the perspective of parameterized complexity, possibly with an additional constraint on the maximum number of active interfaces at a vertex \cite{MinMaxCoverageInMultiInterfaceNetworksPathwidth,OnComputationalComplexityOfCoveringMultiInterfaceNetworks,SurveyOfCoverageProblemsinMultiInterfaceWirelessNetworks}, yielding efficient algorithms for some graph classes. For example, a variant with a constraint on the cost of a solution is considered, where the goal is to maximize some profit function \cite{SurveyOfCoverageProblemsinMultiInterfaceWirelessNetworks}. Some papers also study other problems in multi-interface networks, like cheapest path \cite{ExploitingMultiInterfaceNetworksConnectivityAndCheapestPaths}, matchings \cite{MaximumMatchingInMultiInterfaceNetworks}, or flow problems \cite{FlowProblemsInMultiInterfaceNetworks}.

\subsection{Our contribution and techniques}

We study the approximability of the \coverage and \connectivity problems minimizing the max-cost with heterogeneous cost functions. We give a \setcover-style LP relaxation for the \coverage problem and show that a randomized rounding procedure yields an $O\br{\log n}$-approximation with high probability. This is the first logarithmic approximation for this problem with heterogeneous costs, and it matches the previously known lower bound of $\Omega\br{\log n}$ for stars with unit costs \cite{MinimizeTheMaximumDutyInMultiInterfaceNetworks}. It also improves the previously best known $O\br{\ln \Delta + b \cdot \min\brc{\ln \Delta, c_{\max}}}$-approximation ratio \cite{MinimizeTheMaximumDutyInMultiInterfaceNetworks} (for some graph input parameter $b$), which was limited to the homogeneous case. 
A simpler deterministic rounding of the same LP gives a $k$-approximation, generalizing the $1+(k-2) \frac{c_{\max}}{2 c_{\min}}$-approximation for the min-max variant with homogeneous costs \cite{MinimizeTheMaximumDutyInMultiInterfaceNetworks}.

For the \connectivity problem, we extend the LP by adding flow-based cut constraints encoding global connectivity. Our rounding procedure is similar to the one for the \coverage problem; however, this time the process is iterative and the algorithm proceeds until all vertices are connected. Each iteration is randomized and we show that in expectation, in each such round the algorithm covers a significant fraction of edges of a graph that spans all vertices. This yields an $O\br{\log^2 n}$-approximation algorithm, which is the first non-trivial approximation algorithm for this problem.

\subsection{Paper Organization}

The rest of the paper is structured as follows:
In \Cref{sec:preliminaries}, we give notations, define the problems formally, and recall the elements of probability theory useful for the analysis. We also describe the preprocessing required for our randomized algorithms for the \coverage and \connectivity problems to work.
In \Cref{sec:coverage}, we present the $O\br{\log n}$-approximation and the $k$-approximation algorithms for the \coverage problem.
In \Cref{sec:connectivity}, we present the $O\br{\log^2 n}$-approximation algorithm for the \connectivity problem.
We conclude with a short discussion on open questions and future work.

%% file: figures/combined-illustration.tex
\begin{minipage}[t]{0.31\linewidth}
\centering
\begin{tikzpicture}[scale=0.52]
    \colorlet{iface1}{red!75!black}
    \colorlet{iface2}{blue!70!black}
    \colorlet{iface3}{green!55!black}
    \colorlet{iface4}{orange!85!black}

    \node[circle, draw, minimum size=7mm, inner sep=0pt] (v1) at (0.0,1.2) {};
    \node[circle, draw, minimum size=7mm, inner sep=0pt] (v2) at (1.9,2.7) {};
    \node[circle, draw, minimum size=7mm, inner sep=0pt] (v3) at (4.2,2.9) {};
    \node[circle, draw, minimum size=7mm, inner sep=0pt] (v4) at (6.2,1.9) {};
    \node[circle, draw, minimum size=7mm, inner sep=0pt] (v5) at (7.3,0.1) {};
    \node[circle, draw, minimum size=7mm, inner sep=0pt] (v6) at (6.0,-1.8) {};
    \node[circle, draw, minimum size=7mm, inner sep=0pt] (v7) at (3.9,-2.6) {};
    \node[circle, draw, minimum size=7mm, inner sep=0pt] (v8) at (1.6,-2.2) {};
    \node[circle, draw, minimum size=7mm, inner sep=0pt] (v9) at (-0.2,-0.8) {};
    \node[circle, draw, minimum size=7mm, inner sep=0pt] (v10) at (3.0,0.5) {};

    \draw[gray!75, line width=1.1pt] (v1) -- (v2) -- (v3) -- (v4) -- (v5) -- (v6) -- (v7) -- (v8) -- (v9) -- (v1);
    \draw[gray!75, line width=1.1pt] (v1) -- (v10);
    \draw[gray!75, line width=1.1pt] (v2) -- (v10);
    \draw[gray!75, line width=1.1pt] (v3) -- (v10);
    \draw[gray!75, line width=1.1pt] (v4) -- (v10);
    \draw[gray!75, line width=1.1pt] (v7) -- (v10);
    \draw[gray!75, line width=1.1pt] (v8) -- (v10);
    \draw[gray!75, line width=1.1pt] (v2) -- (v8);
    \draw[gray!75, line width=1.1pt] (v3) -- (v7);
    \draw[gray!75, line width=1.1pt] (v4) -- (v6);

    \begin{scope}[shift={(v1.center)}]
        \fill[iface1] (-0.26,-0.12) rectangle (-0.02,0.12);
        \fill[iface2] (0.02,-0.12) rectangle (0.26,0.12);
    \end{scope}
    \begin{scope}[shift={(v2.center)}]
        \fill[iface1] (-0.26,-0.12) rectangle (-0.02,0.12);
        \fill[iface3] (0.02,-0.12) rectangle (0.26,0.12);
    \end{scope}
    \begin{scope}[shift={(v3.center)}]
        \fill[iface2] (-0.40,-0.12) rectangle (-0.16,0.12);
        \fill[iface3] (-0.12,-0.12) rectangle (0.12,0.12);
        \fill[iface4] (0.16,-0.12) rectangle (0.40,0.12);
    \end{scope}
    \begin{scope}[shift={(v4.center)}]
        \fill[iface2] (-0.26,-0.12) rectangle (-0.02,0.12);
        \fill[iface4] (0.02,-0.12) rectangle (0.26,0.12);
    \end{scope}
    \begin{scope}[shift={(v5.center)}]
        \fill[iface1] (-0.26,-0.12) rectangle (-0.02,0.12);
        \fill[iface4] (0.02,-0.12) rectangle (0.26,0.12);
    \end{scope}
    \begin{scope}[shift={(v6.center)}]
        \fill[iface1] (-0.40,-0.12) rectangle (-0.16,0.12);
        \fill[iface2] (-0.12,-0.12) rectangle (0.12,0.12);
        \fill[iface4] (0.16,-0.12) rectangle (0.40,0.12);
    \end{scope}
    \begin{scope}[shift={(v7.center)}]
        \fill[iface2] (-0.26,-0.12) rectangle (-0.02,0.12);
        \fill[iface3] (0.02,-0.12) rectangle (0.26,0.12);
    \end{scope}
    \begin{scope}[shift={(v8.center)}]
        \fill[iface1] (-0.40,-0.12) rectangle (-0.16,0.12);
        \fill[iface3] (-0.12,-0.12) rectangle (0.12,0.12);
        \fill[iface4] (0.16,-0.12) rectangle (0.40,0.12);
    \end{scope}
    \begin{scope}[shift={(v9.center)}]
        \fill[iface1] (-0.40,-0.12) rectangle (-0.16,0.12);
        \fill[iface2] (-0.12,-0.12) rectangle (0.12,0.12);
        \fill[iface3] (0.16,-0.12) rectangle (0.40,0.12);
    \end{scope}
    \begin{scope}[shift={(v10.center)}]
        \fill[iface2] (-0.26,-0.12) rectangle (-0.02,0.12);
        \fill[iface3] (0.02,-0.12) rectangle (0.26,0.12);
    \end{scope}
\end{tikzpicture}

{\scriptsize (a) Input graph}
\end{minipage}
\hfill
\begin{minipage}[t]{0.31\linewidth}
\centering
\begin{tikzpicture}[scale=0.52]
    \colorlet{iface1}{red!75!black}
    \colorlet{iface2}{blue!70!black}
    \colorlet{iface3}{green!55!black}
    \colorlet{iface4}{orange!85!black}

    \node[circle, draw, minimum size=7mm, inner sep=0pt] (v1) at (0.0,1.2) {};
    \node[circle, draw, minimum size=7mm, inner sep=0pt] (v2) at (1.9,2.7) {};
    \node[circle, draw, minimum size=7mm, inner sep=0pt] (v3) at (4.2,2.9) {};
    \node[circle, draw, minimum size=7mm, inner sep=0pt] (v4) at (6.2,1.9) {};
    \node[circle, draw, minimum size=7mm, inner sep=0pt] (v5) at (7.3,0.1) {};
    \node[circle, draw, minimum size=7mm, inner sep=0pt] (v6) at (6.0,-1.8) {};
    \node[circle, draw, minimum size=7mm, inner sep=0pt] (v7) at (3.9,-2.6) {};
    \node[circle, draw, minimum size=7mm, inner sep=0pt] (v8) at (1.6,-2.2) {};
    \node[circle, draw, minimum size=7mm, inner sep=0pt] (v9) at (-0.2,-0.8) {};
    \node[circle, draw, minimum size=7mm, inner sep=0pt] (v10) at (3.0,0.5) {};

    \draw[iface1, line width=0.9pt] (v1) -- (v2);
    \draw[iface3, line width=0.9pt] (v2) -- (v3);
    \draw[iface4, line width=0.9pt] (v3) -- (v4);
    \draw[iface4, line width=0.9pt] (v4) -- (v5);
    \draw[iface1, line width=0.9pt] (v5) -- (v6);
    \draw[iface2, line width=0.9pt] (v6) -- (v7);
    \draw[iface3, line width=0.9pt] (v7) -- (v8);
    \draw[iface1, line width=0.9pt] (v8) -- (v9);
    \draw[iface1, line width=0.9pt] (v9) -- (v1);
    \draw[iface2, line width=0.9pt] (v1) -- (v10);
    \draw[iface3, line width=0.9pt] (v2) -- (v10);
    \draw[iface2, line width=0.9pt] (v3) -- (v10);
    \draw[iface2, line width=0.9pt] (v4) -- (v10);
    \draw[iface3, line width=0.9pt] (v7) -- (v10);
    \draw[iface3, line width=0.9pt] (v8) -- (v10);
    \draw[iface3, line width=0.9pt] (v2) -- (v8);
    \draw[iface2, line width=0.9pt] (v3) -- (v7);
    \draw[iface2, line width=0.9pt] (v4) -- (v6);

    \begin{scope}[shift={(v1.center)}]
        \fill[iface1] (-0.26,-0.12) rectangle (-0.02,0.12);
        \fill[iface2] (0.02,-0.12) rectangle (0.26,0.12);
    \end{scope}
    \begin{scope}[shift={(v2.center)}]
        \fill[iface1] (-0.26,-0.12) rectangle (-0.02,0.12);
        \fill[iface3] (0.02,-0.12) rectangle (0.26,0.12);
    \end{scope}
    \begin{scope}[shift={(v3.center)}]
        \fill[iface2] (-0.40,-0.12) rectangle (-0.16,0.12);
        \fill[iface3] (-0.12,-0.12) rectangle (0.12,0.12);
        \fill[iface4] (0.16,-0.12) rectangle (0.40,0.12);
    \end{scope}
    \begin{scope}[shift={(v4.center)}]
        \fill[iface2] (-0.26,-0.12) rectangle (-0.02,0.12);
        \fill[iface4] (0.02,-0.12) rectangle (0.26,0.12);
    \end{scope}
    \begin{scope}[shift={(v5.center)}]
        \fill[iface1] (-0.26,-0.12) rectangle (-0.02,0.12);
        \fill[iface4] (0.02,-0.12) rectangle (0.26,0.12);
    \end{scope}
    \begin{scope}[shift={(v6.center)}]
        \fill[iface1] (-0.26,-0.12) rectangle (-0.02,0.12);
        \fill[iface2] (0.02,-0.12) rectangle (0.26,0.12);
    \end{scope}
    \begin{scope}[shift={(v7.center)}]
        \fill[iface2] (-0.26,-0.12) rectangle (-0.02,0.12);
        \fill[iface3] (0.02,-0.12) rectangle (0.26,0.12);
    \end{scope}
    \begin{scope}[shift={(v8.center)}]
        \fill[iface1] (-0.26,-0.12) rectangle (-0.02,0.12);
        \fill[iface3] (0.02,-0.12) rectangle (0.26,0.12);
    \end{scope}
    \begin{scope}[shift={(v9.center)}]
        \fill[iface1] (-0.12,-0.12) rectangle (0.12,0.12);
    \end{scope}
    \begin{scope}[shift={(v10.center)}]
        \fill[iface2] (-0.26,-0.12) rectangle (-0.02,0.12);
        \fill[iface3] (0.02,-0.12) rectangle (0.26,0.12);
    \end{scope}
\end{tikzpicture}

{\scriptsize (b) Coverage assignment}
\end{minipage}
\hfill
\begin{minipage}[t]{0.31\linewidth}
\centering
\begin{tikzpicture}[scale=0.52]
    \colorlet{iface1}{red!75!black}
    \colorlet{iface2}{blue!70!black}
    \colorlet{iface3}{green!55!black}
    \colorlet{iface4}{orange!85!black}

    \node[circle, draw, minimum size=7mm, inner sep=0pt] (v2)  at (1.9, 2.7) {};
    \node[circle, draw, minimum size=7mm, inner sep=0pt] (v6)  at (6.0,-1.8) {};
    \node[circle, draw, minimum size=7mm, inner sep=0pt] (v9)  at (-0.2,-0.8) {};
    \node[circle, draw, minimum size=7mm, inner sep=0pt] (v3) at (4.2, 2.9) {};
    \node[circle, draw, minimum size=7mm, inner sep=0pt] (v7) at (3.9,-2.6) {};
    \node[circle, draw, minimum size=7mm, inner sep=0pt] (v1) at (0.0, 1.2) {};
    \node[circle, draw, minimum size=7mm, inner sep=0pt] (v5) at (7.3, 0.1) {};
    \node[circle, draw, minimum size=7mm, inner sep=0pt] (v8) at (1.6,-2.2) {};
    \node[circle, draw, minimum size=7mm, inner sep=0pt] (v4)  at (6.2, 1.9) {};
    \node[circle, draw, minimum size=7mm, inner sep=0pt] (v10) at (3.0, 0.5) {};

    \draw[gray!35, line width=0.7pt] (v2) -- (v3);
    \draw[gray!35, line width=0.7pt] (v2) -- (v10);
    \draw[gray!35, line width=0.7pt] (v7) -- (v10);
    \draw[gray!35, line width=0.7pt] (v7) -- (v8);
    \draw[gray!35, line width=0.7pt] (v8) -- (v10);
    \draw[gray!35, line width=0.7pt] (v6) -- (v7);

    \draw[iface1, line width=2.0pt] (v1)  -- (v2);
    \draw[iface2, line width=2.0pt] (v3)  -- (v4);
    \draw[iface2, line width=2.0pt] (v3)  -- (v10);
    \draw[iface2, line width=2.0pt] (v10) -- (v4);
    \draw[iface4, line width=2.0pt] (v4)  -- (v5);
    \draw[iface4, line width=2.0pt] (v4)  -- (v6);
    \draw[iface1, line width=2.0pt] (v5)  -- (v6);
    \draw[iface3, line width=2.0pt] (v3)  -- (v7);
    \draw[iface1, line width=2.0pt] (v2)  -- (v8);
    \draw[iface1, line width=2.0pt] (v8)  -- (v9);
    \draw[iface1, line width=2.0pt] (v9)  -- (v1);
    \draw[iface2, line width=2.0pt] (v1)  -- (v10);

    \begin{scope}[shift={(v1.center)}]
        \fill[iface1] (-0.26,-0.12) rectangle (-0.02,0.12);
        \fill[iface2] (0.02,-0.12) rectangle (0.26,0.12);
    \end{scope}
    \begin{scope}[shift={(v2.center)}]
        \fill[iface1] (-0.12,-0.12) rectangle (0.12,0.12);
    \end{scope}
    \begin{scope}[shift={(v3.center)}]
        \fill[iface2] (-0.26,-0.12) rectangle (-0.02,0.12);
        \fill[iface3] (0.02,-0.12) rectangle (0.26,0.12);
    \end{scope}
    \begin{scope}[shift={(v4.center)}]
        \fill[iface2] (-0.26,-0.12) rectangle (-0.02,0.12);
        \fill[iface4] (0.02,-0.12) rectangle (0.26,0.12);
    \end{scope}
    \begin{scope}[shift={(v5.center)}]
        \fill[iface1] (-0.26,-0.12) rectangle (-0.02,0.12);
        \fill[iface4] (0.02,-0.12) rectangle (0.26,0.12);
    \end{scope}
    \begin{scope}[shift={(v6.center)}]
        \fill[iface1] (-0.26,-0.12) rectangle (-0.02,0.12);
        \fill[iface4] (0.02,-0.12) rectangle (0.26,0.12);
    \end{scope}
    \begin{scope}[shift={(v7.center)}]
        \fill[iface3] (-0.12,-0.12) rectangle (0.12,0.12);
    \end{scope}
    \begin{scope}[shift={(v8.center)}]
        \fill[iface1] (-0.12,-0.12) rectangle (0.12,0.12);
    \end{scope}
    \begin{scope}[shift={(v9.center)}]
        \fill[iface1] (-0.12,-0.12) rectangle (0.12,0.12);
    \end{scope}
    \begin{scope}[shift={(v10.center)}]
        \fill[iface2] (-0.12,-0.12) rectangle (0.12,0.12);
    \end{scope}
\end{tikzpicture}

{\scriptsize (c) Connectivity assignment}
\end{minipage}

%% file: sections/preliminaries.tex
\section{Preliminaries} \label{sec:preliminaries}

Let $G=\br{V, E}$ be a connected simple graph. Let $\sbr{k}$ be the set of interfaces across the network and let $\lambda\colon V \to 2^{\sbr{k}}$ be a function that assigns to each vertex $v\in V$ its set of available interfaces, such that for each edge $uv \in E$, $\lambda\br{u} \cap \lambda\br{v} \neq \emptyset$.
An assignment of active interfaces is a function $\lambda_A\colon V \to 2^{\sbr{k}}$ such that $\lambda_A\br{v} \subseteq \lambda\br{v}$ for every $v\in V$. We say that an edge $uv \in E$ is \emph{covered} if $\lambda_A\br{u} \cap \lambda_A\br{v} \neq \emptyset$. We denote by $\cov\br{\lambda_A} = \{uv \in E : \lambda_A\br{u} \cap \lambda_A\br{v} \neq \emptyset\}$ the set of edges covered by $\lambda_A$, and by $G_A=(V,\cov\br{\lambda_A})$. 

In the \coverage problem, we wish to find an assignment of active interfaces $\lambda_A$ such that $G_A=G$. We call such an assignment a \emph{covering assignment}. In the \connectivity problem, we wish to find an assignment of active interfaces $\lambda_A$ such that $G_A$ is connected. We call such an assignment a \emph{connecting assignment}, i.e., an assignment that induces a connected spanning subgraph of $G$.

We are also given a cost function $c\colon \sbr{k}\times V \to \mathbb{N}_{\geq 0}$, which denotes the cost of activating the interface $i\in\sbr{k}$ at the vertex $v\in V$. 
For an assignment $\lambda_A$, define the \emph{cost of $v$} as:
$$\COST\br{v, \lambda_A} = \sum_{i\in\lambda_A\br{v}}c\br{i, v}.$$
The \emph{max-cost} of an assignment $\lambda_A$ is defined as:
$$\COST \br{\lambda_A}=\max_{v\in V} \COST\br{v, \lambda_A}.$$
The goal is to find an assignment $\lambda_A$ that minimizes $\COST \br{\lambda_A}$ while ensuring either \coverage or \connectivity. 
The formal problem statements are as follows.

\pbdef{\coverage}{A connected simple graph $G=(V,E)$, available interfaces $\lambda\colon V \to 2^{\sbr{k}}$, a cost function $c\colon \sbr{k}\times V \to \mathbb{N}_{\geq 0}$.}{Find a covering assignment $\lambda_A$ minimizing $\COST\br{\lambda_A}$.}

\pbdef{\connectivity}{A connected simple graph $G=(V,E)$, available interfaces $\lambda\colon V \to 2^{\sbr{k}}$, a cost function $c\colon \sbr{k}\times V \to \mathbb{N}_{\geq 0}$.}{Find a connecting assignment $\lambda_A$ minimizing $\COST\br{\lambda_A}$.}

We denote by $\OPT$ the cost of an optimal solution to the problem at hand. We will refer to the min-max covering assignment problem as \coverage and the min-max connecting assignment problem as \connectivity. Note that there also exists a min-sum variant of these problems, where the goal is to minimize $\sum_{v\in V} \COST\br{v, \lambda_A}$ instead of $\COST\br{\lambda_A}$.
\subsection{Probabilistic tools}

Since our algorithms are randomized, we will need the following probability facts. 
\begin{theorem}[Markov's Inequality]
Let $X$ be a non-negative random variable. Then, for $a > 0$:
\[
    \Pr\sbr{X \geq a} \leq \frac{\mathbb{E}\sbr{X}}{a}.
\]
\end{theorem}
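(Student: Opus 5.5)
The plan is to compare $X$ with the scaled indicator of the event $\{X\ge a\}$. For the fixed $a>0$, consider the random variable $Y = a\cdot \mathbf{1}\sbr{X\geq a}$, which equals $a$ when $X \geq a$ and $0$ otherwise. The first step is the pointwise inequality $Y \leq X$ on every outcome. On the event $\{X \geq a\}$ we have $Y = a \leq X$. On the complement we have $Y = 0 \leq X$, and here the non-negativity of $X$ is exactly what is needed.

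Next, apply monotonicity of expectation to obtain $\E{Y} \leq \E{X}$. Since $Y$ is $a$ times an indicator, linearity gives $\E{Y} = a\Pr\sbr{X\geq a}$. Hence $a\Pr\sbr{X\geq a}\leq \E{X}$, and dividing by $a>0$ yields the claimed bound.

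No step is a real obstacle. The only points needing care are these. Non-negativity of $X$ is used exactly once, on the complementary event. The statement is trivially true if $\E{X}=\infty$, so we may assume $X$ is integrable, and then monotonicity of expectation applies directly.
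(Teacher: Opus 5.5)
Your proof is correct. The pointwise bound $a\cdot\mathbf{1}[X\geq a]\leq X$ uses non-negativity only on the complementary event, and monotonicity plus linearity of expectation then finish it; this is the standard argument. The paper gives no proof of this statement; it quotes Markov's inequality as a standard tool, so there is no alternative route to compare against (and your integrability caveat is unnecessary, since monotonicity of expectation also holds for non-negative variables with infinite mean).
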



\begin{theorem}[(Weighted) Chernoff Bound]
    Let $X_1, X_2, \dots, X_n$ be independent random variables taking values in $\sbr{0,1}$, let $w_1, w_2, \dots, w_n$ be non-negative weights, such that for every $i\in \sbr{n}$ we have that $w_i\in \sbr{0,1}$. Let $X=\sum_{i=1}^{n} w_i\cdot X_i$ and let $\mu = \E{X}$. Then for every $\delta_1 > 0$ and $\delta_2 \in (0,1)$ we have that:
    \[
        \Pr\br{X \geq (1+\delta_1)\cdot \mu} \leq \br{\frac{e^{\delta_1}}{(1+\delta_1)^{1+\delta_1}}}^\mu
        \qquad
        \Pr\br{X \leq (1-\delta_2)\cdot \mu} \leq \br{\frac{e^{-\delta_2}}{(1-\delta_2)^{1-\delta_2}}}^\mu.
    \]
\end{theorem}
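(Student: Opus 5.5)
The statement to prove is the weighted Chernoff bound: independent $X_i\in[0,1]$, weights $w_i\in[0,1]$, $X=\sum_i w_iX_i$, $\mu=\E{X}$. The approach is the standard moment generating function argument. The only change from the classical bound is that the summands $Y_i=w_iX_i$ again lie in $[0,1]$, so the proof reduces to the unweighted one applied to the $Y_i$.

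\textbf{Upper tail.} Fix $t>0$ and set $Y_i=w_iX_i$, which are independent with values in $[0,1]$.
\begin{enumerate}
\item Applying Markov's Inequality to the non-negative variable $e^{tX}$ gives
\[
\Pr\br{X\ge(1+\delta_1)\mu}\le \E{e^{tX}}e^{-t(1+\delta_1)\mu}.
\]
\item By independence, $\E{e^{tX}}=\prod_i\E{e^{tY_i}}$.
\item Since $y\mapsto e^{ty}$ is convex, on $[0,1]$ we have $e^{ty}\le 1+y(e^t-1)$. Taking expectations, $\E{e^{tY_i}}\le 1+\E{Y_i}(e^t-1)\le \e{\E{Y_i}(e^t-1)}$, using $1+x\le e^x$.
\item Multiplying over $i$ gives $\E{e^{tX}}\le \e{\mu(e^t-1)}$.
\item Choosing $t=\ln(1+\delta_1)>0$ yields $\e{\mu\delta_1}(1+\delta_1)^{-(1+\delta_1)\mu}$, which is the claimed bound.
\end{enumerate}

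\textbf{Lower tail.} The argument is symmetric, with $t<0$.
\begin{enumerate}
\item Write $\Pr\br{X\le(1-\delta_2)\mu}=\Pr\br{e^{tX}\ge e^{t(1-\delta_2)\mu}}$ and apply Markov's Inequality.
\item The same convexity bound $e^{ty}\le 1+y(e^t-1)$ holds for any real $t$ on $[0,1]$, so again $\E{e^{tX}}\le\e{\mu(e^t-1)}$.
\item Choosing $t=\ln(1-\delta_2)<0$, which requires $\delta_2\in(0,1)$, gives $\e{-\mu\delta_2}(1-\delta_2)^{-(1-\delta_2)\mu}$, matching the statement.
\end{enumerate}

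No real obstacle arises. The one point needing care is the convexity step: it needs each $Y_i\in[0,1]$, and this is exactly where the hypothesis $w_i\in[0,1]$ is used. That step also handles non-Bernoulli $X_i$ taking any value in $[0,1]$. If $\mu=0$, then $X=0$ almost surely, the upper-tail bound equals $1$ and holds trivially, and the lower-tail bound also equals $1$.
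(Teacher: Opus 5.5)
Your proposal is correct and follows essentially the same route as the paper. Both arguments apply Markov's inequality to $e^{tX}$, use independence, bound $e^{ty}\le 1+y(e^t-1)$ on $[0,1]$ for $y=w_iX_i$, apply $1+x\le e^x$, and choose $t=\ln(1+\delta_1)$ and $t=\ln(1-\delta_2)$; the paper writes the lower tail with $e^{-tX}$ and $t>0$, which is the same computation as your $t<0$.
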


The proof of the above fact follows the proof of the standard Chernoff bound and we include it in \Cref{sec:appendix} for completeness.

\subsection{Preprocessing}
Before applying our algorithms (for both \coverage and \connectivity), we preprocess the instance in the following way: First, by a polynomial number of guesses and rescaling, we reduce the instances so that all of the costs are rational values in $\sbr{0,1}$ and we have $\OPT\geq 1/2$. Second, we partition vertices into two classes. A vertex $v$ is \emph{cheap} if (in the scaled instance) $\sum_{i\in\lambda\br{v}} c\br{i,v}\leq 1$; for such vertices we may assume all available interfaces are activated. Vertices that are not cheap are \emph{expensive}; for them we may assume $\COST\br{v,\lambda_A}\geq 1$ in the analyzed solutions, while incurring only a factor $2$ loss in the approximation ratio. The full preprocessing procedure and its complete analysis are deferred to \Cref{sec:appendix-preprocessing}.

%% file: sections/lp-based-approximation-algorithms.tex
\section{Warmup: Approximation algorithms for the coverage problem} \label{sec:coverage}

In this section, we consider the \coverage problem. According to the preprocessing, we assume that all costs are in $\sbr{0,1}$ and $\OPT \geq 1/2$. 
The idea behind the ILP formulation and the algorithms is inspired by the rounding procedure known for \textsc{Set Cover} \cite{vazirani2001approximation} and related problems \cite{Angelidakis2018ShortestPQ}.

\subsection{ILP formulations of the \coverage problem}

We start with the ILP formulation of the \coverage problem. Let $x_{i, v}$ be a binary variable that is $1$ iff the interface $i\in\lambda\br{v}$ is assigned to vertex $v\in V$. If a given vertex $v$ is cheap, then without large cost increase we can enforce $x_{i, v}=1$, for every $i\in\lambda\br{v}$. Otherwise, if a vertex $v$ is expensive, we can assume that in an optimal solution we have $\sum_{i\in \lambda\br{v}} x_{i, v}\cdot c\br{i, v} \geq 1$, which corresponds to the fact that $\COST\br{\lambda_A, v}\geq 1$. Thus, we can write the following ILP formulation of the \coverage problem:
\begin{align}
    \text{min} \quad & M \nonumber\\
    \text{s.t.} \quad & \sum_{i\in\lambda\br{v}}c\br{i, v}\cdot x_{i, v} \leq M \quad \forall v\in V \label{eqt:covering_cost}\\
    & \sum_{i\in\lambda\br{u}\cap\lambda\br{v}}\min\brc{x_{i, u}, x_{i, v}} \geq 1 \quad \forall uv\in E \label{eqt:coverage}\\
    &\begin{cases}
         x_{i, v} = 1 \quad \forall i\in\lambda\br{v} & \text{if } \sum_{i\in \lambda\br{v}} c\br{i, v} < 1\\
        \sum_{i\in \lambda\br{v}} c\br{i, v}\cdot x_{i, v} \geq 1 & \text{otherwise}
    \end{cases}\quad \forall v\in V \label{eqt:costs_coverage}\\
    & x_{i, v} \in \{0, 1\} \quad \forall v\in V, i\in\lambda\br{v} \label{eqt:integrality_coverage}
\end{align}
Constraints~\eqref{eqt:covering_cost} enforce that the cost of each vertex does not exceed the global cost $M$. Constraints~\eqref{eqt:coverage} ensure that for every edge $uv\in E\br{G}$, there exists a common activated interface. Constraints~\eqref{eqt:costs_coverage} encode the assumption that for every cheap vertex, all of its interfaces are activated, while for every expensive vertex, the cost of activated interfaces is at least $1$. Finally, constraints~\eqref{eqt:integrality_coverage} require the solution to be integral.
Note that the above formulation is not linear due to constraints \eqref{eqt:coverage}. However, by adding additional variables $z_{i, uv}$, we can linearize those by replacing them with the constraints $\sum_{i\in\lambda\br{u}\cap\lambda\br{v}}z_{i, uv} \geq 1$ together with $z_{i, uv} \leq x_{i, u}$ and $z_{i, uv} \leq x_{i, v}$ for every $i\in\lambda\br{u}\cap\lambda\br{v}$ and $uv\in E$. Thus, we can solve the LP relaxation of this formulation in polynomial time.


\subsection{A $k$-approximation algorithm for the Coverage problem}
In this section we show that a simple rounding of the LP solution can be used to obtain a $k$-approximation. Note that for this simple algorithm the preprocessing is not necessary, which makes it easily adaptable for the min-sum variant of the \coverage problem (assuming the ILP is appropriately changed).

\begin{algorithm}[H]
\caption{$k$-approximation algorithm} \label{alg:2}
Solve the LP relaxation and obtain a fractional solution $\{x_{i, v}\}_{i\in\lambda(v), v\in V}$.

Set $\lambda_A(v) \gets \{i \in \lambda(v) \colon x_{i, v} \geq \frac{1}{k}\}$, for every $v \in V$.

\Return{$\{\lambda_A(v)\}_{v\in V}$.}
\end{algorithm}

\begin{theorem}
\Cref{alg:2} is a $k$-approximation algorithm for the \coverage problem.
\end{theorem}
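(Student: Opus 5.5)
The plan is to compare the cost of the rounded assignment with the optimal value $M^*$ of the LP relaxation that \Cref{alg:2} solves. That relaxation is the linearized formulation with variables $x_{i,v}, z_{i,uv} \in [0,1]$, constraints \eqref{eqt:covering_cost} and \eqref{eqt:costs_coverage}, and the linearized form of \eqref{eqt:coverage}. I will prove two things: the rounded assignment is a covering assignment, and its max-cost is at most $k\cdot M^*$. Then I relate $M^*$ to $\OPT$.

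\textbf{Feasibility.} Fix an edge $uv\in E$. By the linearized coverage constraint, $\sum_{i\in\lambda(u)\cap\lambda(v)} z_{i,uv}\ge 1$. The sum has at most $|\lambda(u)\cap\lambda(v)|\le k$ terms, so some interface $i$ satisfies $z_{i,uv}\ge 1/k$. Since $z_{i,uv}\le x_{i,u}$ and $z_{i,uv}\le x_{i,v}$, both $x_{i,u}\ge 1/k$ and $x_{i,v}\ge 1/k$. Hence $i\in\lambda_A(u)\cap\lambda_A(v)$, and $uv$ is covered. If $u$ or $v$ is cheap, constraint \eqref{eqt:costs_coverage} forces all its $x$-values to $1$, so every interface of that vertex is selected; this is consistent with the argument and needs no separate treatment.

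\textbf{Cost.} For every vertex $v$, each selected interface has $k\,x_{i,v}\ge 1$. Therefore
\[
\COST(v,\lambda_A)=\sum_{i\in\lambda_A(v)} c(i,v)\le \sum_{i\in\lambda_A(v)} k\, x_{i,v}\, c(i,v)\le k\sum_{i\in\lambda(v)} c(i,v)\,x_{i,v}\le k\,M^*,
\]
where the last step is constraint \eqref{eqt:covering_cost}. This holds uniformly for cheap and expensive vertices. For a cheap vertex it simply reflects that $M^*\ge\sum_i c(i,v)$, because all its $x$-values equal $1$. So $\COST(\lambda_A)\le k M^*$.

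\textbf{Relating $M^*$ to $\OPT$.} This is the only delicate step. The LP is a relaxation of the ILP \eqref{eqt:covering_cost}--\eqref{eqt:integrality_coverage}, so $M^*$ is at most the ILP optimum. By the paper's modelling assumption behind \eqref{eqt:costs_coverage}, an optimal covering assignment satisfies those constraints: cheap vertices activate everything and expensive vertices pay at least $1$. Under that assumption, its indicator vector is ILP-feasible with value $\OPT$, giving $M^*\le\OPT$ and hence the ratio $k$. I would stress that neither the feasibility argument nor the cost argument above uses \eqref{eqt:costs_coverage}. Consequently, if one solves the LP without these constraints, as the remark that preprocessing is unnecessary suggests, the relaxation is directly satisfied by any optimal covering assignment, so $M^*\le\OPT$ holds unconditionally. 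The same reasoning goes through for the min-sum objective by summing the per-vertex bound.
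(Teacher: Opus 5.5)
Your proof is correct and follows the paper's argument. You average over the at most $k$ shared interfaces to find one whose $x$-values at both endpoints are at least $1/k$, then charge each selected interface via $k\,x_{i,v}\ge 1$ against constraint \eqref{eqt:covering_cost}. Your added remark that constraint \eqref{eqt:costs_coverage} is never used, so solving the LP without it gives $M^*\le\OPT$ unconditionally, usefully makes explicit the step the paper passes over when it writes $\le k\cdot\OPT$ and says preprocessing is unnecessary.
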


\begin{proof}
Consider an edge $uv\in E$. By averaging, there exists an interface $i\in\lambda\br{u}\cap\lambda\br{v}$ such that $\min\brc{x_{i, u}, x_{i, v}} \geq \frac{1}{k}$. Thus, $i\in\lambda_A\br{u}\cap\lambda_A\br{v}$ so $\lambda_A$ is a covering assignment. Moreover, since $x_{i, v} \geq \frac{1}{k}$ for every $i\in\lambda_A\br{v}$, we have that $c\br{\lambda_A\br{v}} = \sum_{i\in\lambda_A\br{v}}c\br{i, v} \leq k \cdot \sum_{i\in\lambda_A\br{v}}c\br{i, v} \cdot x_{i, v}$. Thus, we have $\COST\br{\lambda_A} \leq k \cdot \max_{v\in V}\sum_{i\in\lambda\br{v}} c\br{i, v} \cdot x_{i, v} \leq k \cdot \OPT$.
\end{proof}

\subsection{An $O\br{\log n}$-approximation algorithm for the \coverage problem}
In order to extract an integral solution with quality independent of $k$, we employ a randomized procedure. For each interface type $i\in\sbr{k}$, we draw a uniformly random threshold $t_i \in \sbr{0, 1}$ and activate the interface $i$ in vertex $v$ if the corresponding variable $x_{i,v}$ scaled up by a factor of $O\br{\log n}$ exceeds $t_i$. The procedure is formalized by the following pseudocode.

\begin{algorithm}[H]
\caption{$O(\log n)$-approximation algorithm for the \coverage problem} \label{alg:1}

Solve the LP relaxation and obtain a fractional solution $\{x_{i, v}\}_{i\in\lambda(v), v\in V}$.

Pick independent uniformly random thresholds $t_i \in \sbr{0, 1}$, for each $i \in \sbr{k}$.

$\lambda_A(v) \gets \{i \in \lambda(v) \colon 2\ln m\cdot x_{i, v} \geq t_i\}$, for every $v \in V$.

\Return{$\{\lambda_A(v)\}_{v\in V}$.}
\end{algorithm}

\begin{theorem}
\Cref{alg:1} is an $O(\log n)$-approximation algorithm for the \coverage problem which succeeds with high probability.
\end{theorem}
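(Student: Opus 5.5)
The proof has two parts: every edge is covered with high probability, and every vertex's cost is $O(\log n)\cdot\OPT$ with high probability. A union bound then combines them. Throughout, let $x$ be an optimal LP solution with value $M^*\le\OPT$, and write $L=2\ln m$.

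\textbf{Coverage.} Fix an edge $uv\in E$ and an interface $i\in\lambda(u)\cap\lambda(v)$. Interface $i$ is active at both endpoints exactly when $t_i\le L\min\{x_{i,u},x_{i,v}\}$, which happens with probability $\min\{1, L\min\{x_{i,u},x_{i,v}\}\}$. The thresholds $t_i$ are independent across interfaces. Hence the probability that $uv$ is uncovered is
\[
\prod_{i}\bigl(1-\min\{1,L\min\{x_{i,u},x_{i,v}\}\}\bigr)\le \exp\Bigl(-L\sum_i\min\{x_{i,u},x_{i,v}\}\Bigr)\le e^{-L}=m^{-2},
\]
using constraint~\eqref{eqt:coverage}. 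A union bound over the $m$ edges shows that all edges are covered except with probability at most $1/m$. Cheap vertices have every $x_{i,v}=1$, so all their interfaces are active; this is consistent with the computation above.

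\textbf{Cost.} Fix an expensive vertex $v$ and define
\[
X_v=\sum_{i\in\lambda(v)}c(i,v)\,\mathbf{1}[t_i\le Lx_{i,v}].
\]
This is a weighted sum of independent Bernoulli variables, the weights $c(i,v)\in[0,1]$ come from the preprocessing, and
\[
\mu:=\E{X_v}\le L\sum_i c(i,v)x_{i,v}\le L M^*.
\]
By constraint~\eqref{eqt:costs_coverage}, $\sum_i c(i,v)x_{i,v}\ge 1$. Rather than rely on this lower bound on $\mu$, I would pad the sum with dummy terms (or use monotonicity of the Chernoff tail in $\mu$) so that the expectation may be taken to be exactly $\mu'=L\cdot\max(M^*,1/2)=\Theta(\log n)$. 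Applying the weighted Chernoff bound stated earlier with a constant $\delta_1$ (say $\delta_1=e^2$) gives
\[
\Pr[X_v\ge(1+\delta_1)\mu']\le e^{-c'\mu'}\le n^{-2},
\]
since $\mu'\ge \ln m$ and $m\ge n-1$ because $G$ is connected. Here $c'$ is a constant we can make large by choosing $\delta_1$ large. A union bound over the vertices then shows that every vertex has cost $O(\log m)\cdot \max(M^*,1/2)=O(\log n)\cdot\OPT$ with high probability, using $\OPT\ge 1/2$ and $\log m=O(\log n)$.

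\textbf{Main obstacle.} The delicate step is applying Chernoff when the true mean could be much smaller than $\log n$. The assumption $\OPT\ge1/2$ from preprocessing is what lets us use the padded mean $\mu'=\Theta(\log n)$: the upper tail probability for a threshold fixed at $(1+\delta_1)\mu'$ only decreases as the actual mean drops. Finally, the factor-$2$ loss from the preprocessing and the rescaling by the guessed value are constant, so the overall approximation ratio remains $O(\log n)$.
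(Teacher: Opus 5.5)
Your proposal is correct and takes essentially the same approach as the paper: independent thresholds with $1-a\le e^{-a}$ and a union bound over edges give feasibility, and a weighted Chernoff bound with a union bound over vertices gives the cost bound. It is slightly more careful than the paper, which writes $\Pr[i\in\lambda_A(v)]=2\ln m\cdot x_{i,v}$ without the $\min\{1,\cdot\}$ truncation, so its lower bound $\mathbb{E}[\COST(v,\lambda_A)]\ge 2\ln m$ (taken from $\sum_i c(i,v)x_{i,v}\ge 1$) can fail, whereas your padding/monotonicity step using $\OPT\ge 1/2$ avoids this; note that you only use $\mu'\ge\ln m$, and $\mu'$ need not be $\Theta(\log n)$ since $M^*$ can be as large as $k$ after scaling.
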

\begin{proof}
Firstly, we claim that the algorithm indeed returns a feasible solution:
    \begin{lemma}
With high probability, \Cref{alg:1} returns an assignment $\lambda_A$, such that for every edge $uv \in E$, $\lambda_A(u) \cap \lambda_A(v) \neq \emptyset$.
    \end{lemma}
\begin{proof}
    Consider an edge $uv\in E$. We have:
\begin{align*}
    \Pr\sbr{\lambda_A\br{u}\cap\lambda_A\br{v}=\emptyset} & = \prod_{i\in\lambda\br{u}\cap\lambda\br{v}}\Pr\sbr{t_i > 2\ln m \cdot \min\brc{x_{i, u}, x_{i, v}}}\\
    & = \prod_{i\in\lambda\br{u}\cap\lambda\br{v}}\br{1-2\ln m \cdot\min\brc{x_{i, u}, x_{i, v}}}\\&
    \leq \prod_{i\in\lambda\br{u}\cap\lambda\br{v}}\e{-2\ln m \cdot\min\brc{x_{i, u}, x_{i, v}}} 
    \\& = \e{-2\ln m \cdot\sum_{i\in\lambda\br{u}\cap\lambda\br{v}}\min\brc{x_{i, u}, x_{i, v}}}\\
    & \leq \e{-2\ln m} = \frac{1}{m^2}.
\end{align*}

By applying the union bound over all $m$ edges we get that the algorithm returns a covering assignment with probability at least $1-\frac{m}{m^2} = 1-\frac{1}{m}$.
\end{proof}

Now, we also wish to show that the approximation factor achieved by our algorithm is upper bounded by $O\br{\log n}$. We have the following lemma:
\begin{lemma}
    \Cref{alg:1} returns a solution of cost at most $12\ln n \cdot \OPT$ with high probability.
\end{lemma}
\begin{proof}
Fix an expensive vertex $v\in V$. Otherwise, if $v$ is cheap, then the cost of activating all interfaces at $v$ is at most $1\leq2\cdot\OPT$. We want to show that with high probability, the cost of $v$ is at most $12\ln n$ times the cost of $v$ in the fractional solution. 
By $X_{v, i}$ denote the random variable that is $1$ if $2\ln m\cdot x_{i, v} \geq t_i$ and $0$ otherwise. Then we can write $\COST\br{v, \lambda_A} = \sum_{i\in\lambda\br{v}}c\br{i, v} \cdot X_{v, i}$. We have: $\mathbb{E}\sbr{X_{v, i}} = \Pr\sbr{i \in \lambda_A\br{v}} = 2\ln m\cdot x_{i, v}$. Thus, $\mathbb{E}\sbr{\COST\br{v, \lambda_A}} = 2\ln m \cdot \sum_{i\in\lambda\br{v}}c\br{i, v} \cdot x_{i, v}$. Applying the Chernoff bound with $\delta = 2$ we get 
\begin{align*}
    \Pr\sbr{\COST\br{v, \lambda_A} \geq 3\cdot \mathbb{E}\sbr{\COST\br{v, \lambda_A}}} & 
    \leq \br{\frac{e^2}{27}}^{\mathbb{E}\sbr{\COST\br{v, \lambda_A}}}
    \leq \e{-\mathbb{E}\sbr{\COST\br{v, \lambda_A}}}\leq \frac{1}{m^2}
\end{align*}
where the last inequality is by using the fact that, $\sum_{i\in\lambda\br{v}}c\br{i, v} \cdot x_{i, v} \geq 1$. By applying the union bound over all $n$ vertices we get that with probability at least $1-\frac{n}{m^2}\geq 1-\frac{m+1}{m^2}$ for every expensive $v\in V$, $\COST\br{v, \lambda_A} \leq 6\ln m \cdot \sum_{i\in\lambda\br{v}}c\br{i, v} \cdot x_{i, v}$. Thus, with high probability we have $\COST\br{\lambda_A} \leq 6\ln m \cdot \max_{v\in V}\sum_{i\in\lambda\br{v}}c\br{i, v} \cdot x_{i, v} \leq 6\ln m \cdot \OPT\leq 12\ln n \cdot \OPT$. 
\end{proof}
Combining the above lemmas gives the desired result.
\end{proof}

\section{Approximation algorithm for the \connectivity problem} \label{sec:connectivity}

In this section we move towards the \connectivity problem. We again assume that the instance is preprocessed, so we have that all costs are in $\sbr{0,1}$ and $\OPT \geq 1/2$.

\subsection{ILP formulation of the \connectivity problem}

We use a similar ILP to model the \connectivity problem, however we use additional flow variables to encode the connectivity requirement. The $x_{i, v}$ variables are defined as before. For each edge $e$, we add a variable $y_{e}$ which is $1$ iff the edge $e$ is covered by the solution. For any $S\subseteq V$, let $\delta(S)$ denote the set of edges with exactly one endpoint in $S$. Global connectivity is enforced via cut constraints: for every non-trivial cut $(S, V\setminus S)$, there needs to be at least one covered edge crossing the cut. Let $y\br{\delta\br{S}}=\sum_{e \in \delta(S)} y_e$. We can formulate the \connectivity problem as follows:
\begin{align}
    \text{min} \quad & M \nonumber\\
    \text{s.t.} \quad & \sum_{i\in\lambda\br{v}}c\br{i, v} \cdot x_{i, v} \leq M \quad \forall v\in V \label{eqt:4}\\
    & \sum_{i\in\lambda\br{u}\cap\lambda\br{v}}\min\brc{x_{i, u}, x_{i, v}} \geq y_{uv} \quad \forall uv\in E\\
    &\begin{cases}
            x_{i, v} = 1 \quad \forall i\in\lambda\br{v} & \text{if } \sum_{i\in \lambda\br{v}} c\br{i, v} < 1\\
            \sum_{i\in \lambda\br{v}} c\br{i, v}\cdot x_{i, v} \geq 1 & \text{otherwise}
    \end{cases} \quad \forall v\in V \label{eqt:5}\\
    & y\br{\delta\br{S}}\geq 1 \quad \forall S\subset V\colon S\neq\emptyset,\, S\neq V \label{eqt:6}\\
    & x_{i, v}  \in \{0, 1\} \quad \forall v\in V, i\in\lambda\br{v} \nonumber\\
    & y_{uv} \in \{0, 1\} \quad \forall uv\in E \label{eqt:integrality_connectivity}
\end{align}

Observe that the above ILP has an exponential number of constraints, but we can solve its LP relaxation in polynomial time using the ellipsoid method and a separation oracle for the cut constraints, which can be implemented using a max-flow algorithm.

\subsection{An $O(\log^2 n)$-approximation algorithm for the \connectivity problem}
Obtaining an integral solution for the \connectivity problem requires a more careful rounding procedure than for the \coverage problem. We wish to ensure that at least one connected spanning subgraph is covered by the assignment returned by the algorithm. To do so, we start by randomly sampling edges according to the $y$-variables of the fractional solution, in a way so that with high probability, the sampled subgraph $H$ is connected. To achieve this goal, we set the probability of choosing $e$ to be $p_e=\min\{1, 5\ln m \cdot y_e\}$. Then, the goal of the algorithm becomes covering $H$. To do so, we repeat the following process: we draw uniformly random thresholds $t_i$ for each type of interface $i\in \sbr{k}$, and activate $i$ at any vertex $v$ for which $5\ln m \cdot x_{i,v}\geq t_i$. Intuitively, each such iteration costs $O\br{\log n\cdot \OPT}$ in expectation and covers $\Omega\br{1}$ fraction of edges of $H$. After repeating this for a fixed number of $T = O\br{\log n}$ rounds, it can be shown that the algorithm covers all edges of $H$ almost surely. The formal description of the algorithm is given by the following pseudocode.

\begin{algorithm}[H]
\caption{$O\br{\log^2n}$-approximation algorithm for the \connectivity problem} \label{alg:3} 

Solve the LP relaxation and obtain a fractional solution $\{x_{i, v}\}_{i\in\lambda(v), v\in V}, \{y_{uv}\}_{uv\in E}$.

For every edge $e\in E$, $p_e \gets \min\{1, 5\ln m \cdot y_e\}$. 

With probability $p_e$ set $Y_e \gets 1$, otherwise set $Y_e \gets 0$. 

Let $H$ be the subgraph of $G$ induced by the edges $e$ such that $Y_e = 1$.

$T \gets \frac{2\ln m}{1-1/e}$.

$\lambda_A\gets \{\emptyset\}_{v\in V}$.

\For{$j \in \sbr{T}$}{
Pick independent uniformly random thresholds $t_i \in (0, 1)$, for each $i \in \sbr{k}$.

$\lambda_A^j(v) \gets \{i \in \lambda(v) \colon 5\ln m \cdot x_{i, v} \geq t_i\}$, for every $v \in V$.

$\lambda_A \gets \lambda_A^j \cup \lambda_A$.
}
\Return{$\lambda_A$}
\end{algorithm}

\begin{theorem}
    \Cref{alg:3} is an $O\br{\log^2 n}$-approximation algorithm for the \connectivity problem which succeeds with high probability.
\end{theorem}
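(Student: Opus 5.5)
The proof has three parts: (i) $H$ is connected and spanning with high probability; (ii) every edge of $H$ is covered after the $T$ rounds with high probability; (iii) the cost of every vertex is $O(\log^2 n)\cdot\OPT$ with high probability. A final union bound combines them. Throughout, $\{x,y\}$ is the optimal fractional solution, so every vertex has fractional cost $\sum_i c(i,v)x_{i,v}\le \OPT$.

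For (i) I would use Karger's cut-counting bound. The constraints \eqref{eqt:6} say that $y$ is a fractional cut of value at least $1$ across every cut. If some edge has $p_e=1$ it is always sampled, so we may contract it; hence assume $p_e=5\ln m\cdot y_e<1$. For a fixed cut $S$, the probability that no edge of $\delta(S)$ is sampled is $\prod_{e\in\delta(S)}(1-p_e)\le \exp\{-5\ln m\cdot y(\delta(S))\}$. The number of cuts of $y$-value at most $\alpha$ times the minimum (which is $\ge 1$) is at most $n^{2\alpha}$. Grouping cuts by value and summing $n^{2\alpha}m^{-5\alpha}$ over $\alpha\ge 1$ gives failure probability $O(1/m)$, using $m\ge n-1$. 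I expect this to be the main obstacle. The constant $5$ leaves little room, so one has to handle the exponent-versus-count tradeoff carefully, for example by a union bound over $\alpha$ in geometric steps or by a slightly larger constant. A Chernoff-style alternative via $H$ containing a spanning tree does not seem simpler. So $H$ is connected w.h.p., and it suffices to cover every edge of $H$.

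For (ii), fix an edge $uv\in H$. It was sampled only if $y_{uv}>0$, but $y_{uv}$ may be tiny, so I would condition on $Y_{uv}=1$, which occurs with probability $p_{uv}$. In one round, exactly as in the \coverage lemma, the edge stays uncovered with probability at most $\exp\{-5\ln m\sum_i\min\{x_{i,u},x_{i,v}\}\}\le \exp\{-5\ln m\cdot y_{uv}\}$. If $p_{uv}=1$ this is at most $m^{-5}$. Otherwise $q:=p_{uv}=5\ln m\, y_{uv}<1$, and the per-round miss probability is at most $e^{-q}\le 1-(1-1/e)q$. Over $T$ independent rounds the miss probability is $(1-(1-1/e)q)^T\le \exp\{-2\ln m\cdot q\}$. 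Multiplying by the sampling probability $q$, the probability that $uv$ is in $H$ and not covered is at most $q\,m^{-2q}\le 1/(2e\ln m)$. That bound is too weak, so I would instead bound the expected number of uncovered $H$-edges differently. I would replace the analysis by using the unconditional event: $uv$ sampled and uncovered. That probability is at most $q(1-(1-1/e)q)^T$. Since $qe^{-2q\ln m}$ is maximized at $q=1/(2\ln m)$, this only gives $O(1/\ln m)$. So I expect to need extra rounds: taking $T=c\ln m/(1-1/e)\cdot$ a factor so that the bound becomes $m^{-2}$ requires $T=\Theta(\ln m/q)$. To repair this I would couple the two randomizations. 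The event $5\ln m\,\min(x_{i,u},x_{i,v})\ge t_i$ per round has total rate at least $q$ across interfaces. The cleanest fix is to view $H$ covering as expected-fraction progress: the expected number of still-uncovered $H$ edges shrinks by a factor $(1-(1-1/e)q_{\min})$ per round. I would state the bound for edges with $q=1$, that is $y_e\ge 1/(5\ln m)$, and separately argue that edges with small $y_e$ can be dropped from $H$ while preserving connectivity via the cut argument applied to the truncated $y$. This matches the paper's claim that an $\Omega(1)$ fraction of $H$ is covered per round.

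For (iii), per round the expected cost of an expensive $v$ is $5\ln m\sum_i c(i,v)x_{i,v}$, which lies between $5\ln m$ and $5\ln m\cdot\OPT$. The weighted Chernoff bound with $\delta=2$, exactly as in the \coverage lemma, gives round cost at most $15\ln m\cdot\OPT$ with failure probability $m^{-5}$. A union bound over $n$ vertices and $T$ rounds, together with subadditivity of cost under union, gives total cost $\le T\cdot 15\ln m\cdot\OPT=O(\log^2 n)\OPT$. Cheap vertices cost at most $1\le 2\OPT$.
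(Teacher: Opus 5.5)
\textbf{There is a genuine gap in part (ii).} Your parts (i) and (iii) follow the paper's proof. For (i) this is Karger's cut counting. The sum $\sum_{\ell\ge 1} n^{2\ell+2}m^{-5\ell}$ is already $O(1/m)$ once $m\ge n-1$, so there is slack in the constant $5$. For (iii) it is a per-round Chernoff bound with $\delta=2$, a union bound over vertices and rounds, and summation over the $T$ rounds.

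The gap is that you never show the returned assignment covers a connected spanning subgraph, and neither proposed repair works. A per-round contraction factor $1-(1-1/e)q_{\min}$ is useless: $q_{\min}$ can be $O(\log m/n)$, so you would need $\Omega(n)$ rounds. Dropping the edges with $y_e<1/(5\ln m)$ is incompatible with the cut argument, because a cut may consist \emph{entirely} of such edges. If a vertex $s$ of degree $d\gg\ln m$ has $y_{sv}=1/d$ on every incident edge, the truncated $y$ gives the cut $\{s\}$ value $0$, and Karger's lemma has nothing to work with.

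\textbf{The obstacle is real, not an artefact of your analysis.} Take the star situation above and let each edge $sv_j$ share a single interface $i$, with $x_{i,s}=1$ and $x_{i,v_j}=1/d$. Every optimal LP solution has this form if:
\begin{itemize}
\item $\lambda(s)=\{i\}$ with $c(i,s)=1$;
\item each $v_j$ has $i$ at cost $1$, plus a pendant neighbour reachable only through a private interface of cost $1-1/d$;
\item the $v_j$ are joined by a path over a cost-$0$ interface.
\end{itemize}
The pendant bridges force cost $1-1/d$ at each $v_j$, and $s$ forces $M\ge 1$. Since $M=1$ is attainable, $x_{i,v_j}\le 1/d$, and the cut $\{s\}$ forces equality. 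Because $t_i$ is shared, in each round either all edges $sv_j$ are covered or none are, the former with probability $5\ln m/d$. After $T=O(\log m)$ rounds, $s$ is therefore isolated with probability $1-O(\log^2 n/n)$, even though $H$ is connected w.h.p.

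The paper's lemma $\mathbb{E}[|\cov(\lambda_A^j,H_j)|]\ge(1-1/e)\,\mathbb{E}[|E(H_j)|]$ breaks on exactly the point you flagged. Its final step replaces $\sum_{uv\in E(H_j)}\mathbb{E}[Y_{uv}]$ by $\mathbb{E}[|E(H_j)|]$. But for an edge already in $H_j$, the weight $\mathbb{E}[Y_{uv}]=p_{uv}$ can be far below $1$; for $j=1$ the left-hand side has expectation $\sum_e p_e^2$, not $\sum_e p_e$. So your suspicion is correct, and the paper's proof does not close this gap either. Closing it requires changing the rounding or the relaxation, not just a sharper analysis of the same procedure.
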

\begin{proof}
    In order to show that the algorithm returns a connecting assignment, we first show that with high probability, the subgraph $H$ is connected. It should be remarked that constructing $H$ explicitly is not necessary for the algorithm to work. Nonetheless, we do so for the sake of convenience of our analysis. We start with the following lemma:
    \begin{lemma}
        With high probability, $H$ is connected.
    \end{lemma}
    \begin{proof}
        Fix any $S\subset V$ with $S\neq\emptyset$ and $S\neq V$. By definition of the LP, there exists a natural number $\ell\in\mathbb{N}$, such that $\ell\leq y\br{\delta_H\br{S}}<\ell+1$. Let $\delta_H\br{S}=E\br{H}\cap \delta\br{S}$.
        We have:
        \begin{align*}
            \Pr\sbr{\spr{\delta_H\br{S}}=0} & = \prod_{e\in\delta\br{S}}\Pr\sbr{Y_e=0} = \prod_{e\in\delta\br{S}}(1-p_e) \\&\leq \prod_{e\in\delta\br{S}}\e{-p_e} \leq \e{-\sum_{e\in\delta(S)}\min\brc{1,\, 5\ln m\cdot y_e}}
            \\& \leq \e{-5\ell\ln m} = \frac{1}{m^{5\ell}}.
        \end{align*}

        A cut in $G$ is called an $\alpha$-minimum cut if its value is less than $\alpha$ times the value of a minimum cut in $G$. 
        We now make use of the following Karger's lemma \cite{MinimumCutsInNearLinearTime, RandomSamplingInCutFlowAndNetworkDesignProblems}:
        \begin{lemma}[Karger's lemma]
            The number of $\alpha$-minimum cuts in a graph is at most $n^{2\alpha}$.
        \end{lemma}
        Observe that by construction of the LP, every non-trivial fractional cut has value at least $1$, so there are at most $n^{2\ell+2}$ such cuts that have fractional value less than $\ell+1$. Applying the union bound over all of them we obtain:
\begin{align*}
    \Pr\sbr{\exists\, S\subset V,\; S\neq\emptyset,\, S\neq V\colon \spr{\delta_H\br{S}}=0}
    &\leq \sum_{\ell=1}^{\infty}\sum_{\substack{S\subset V,\,S\neq\emptyset,\,S\neq V,\\ y\br{\delta_H\br{S}}<\ell+1}}\Pr\sbr{\spr{\delta_H\br{S}}=0} \\
    &\leq \sum_{\ell=1}^{\infty}n^{2\ell+2}\cdot \frac{1}{m^{5\ell}} \leq \sum_{\ell=1}^{\infty}\frac{1}{m^{\ell}} = \frac{1}{m-1}
\end{align*}
where the last inequality is due to the fact, that assuming that $G$ is not a tree, we get that $m\geq n$. Otherwise, \coverage and \connectivity are equivalent and one can use the $O\br{\log n}$-approximation algorithm instead.

Thus, with probability at least $1-\frac{1}{m-1}$, every non-trivial cut $S$ of $G$, has at least one edge of $H$ crossing it, which means that $H$ spans all vertices of $G$.
\end{proof}

Having shown that the subgraph $H$ is connected, we now wish to show that the assignment $\lambda_A$ returned by the algorithm covers all of the edges of $H$ almost surely. 
Let $j$ be an index of some iteration of the for loop. Let $H_j$ be the (possibly empty) subgraph of $H$ that is not covered by $\lambda_A$ at the beginning of the $j$-th iteration of the for loop. We wish to show, that in expectation, the solution $\lambda_A$ produced in the $j$-th iteration of the for loop, $\lambda_A^j$, covers a significant fraction of the edges in $H_j$. We have the following lemma:
\begin{lemma}
    We have
    $\mathbb{E}\sbr{\spr{\cov\br{\lambda_A^j, H_j}}} \geq \br{1-1/e} \cdot \mathbb{E}\sbr{\spr{E\br{H_j}}}$.
\end{lemma}
\begin{proof}
    Consider any edge $uv\in E\br{G}$. Observe that $\mathbb{E}\sbr{Y_{uv}}\leq 5\ln m \cdot y_{uv}$. By definition of the LP we get that \[
        \sum_{i\in\lambda\br{u}\cap\lambda\br{v}}\min\brc{x_{i, u}, x_{i, v}} \geq y_{uv} \geq \frac{\mathbb{E}\sbr{Y_{uv}}}{5\ln m}.\]
    Therefore, the probability that $uv$ is uncovered by $\lambda_A^j$ is at most:
    \begin{align*}
        \Pr\sbr{\lambda_A\br{u}^j\cap\lambda_A\br{v}^j=\emptyset} & = \prod_{i\in\lambda\br{u}\cap\lambda\br{v}}\Pr\sbr{t_i > 5\ln m\cdot\min\brc{x_{i, u}, x_{i, v}}} \\ 
        & = \prod_{i\in\lambda\br{u}\cap\lambda\br{v}}\br{1-5\ln m\cdot\min\brc{x_{i, u}, x_{i, v}}} \\& \leq \prod_{i\in\lambda\br{u}\cap\lambda\br{v}}\e{-5\ln m\cdot\min\brc{x_{i, u}, x_{i, v}}}\\ 
        & = \e{-5\ln m\cdot\sum_{i\in\lambda\br{u}\cap\lambda\br{v}}\min\brc{x_{i, u}, x_{i, v}}} \\& \leq \e{-5\ln m\cdot\frac{\mathbb{E}\sbr{Y_{uv}}}{5\ln m}}=\e{-\mathbb{E}\sbr{Y_{uv}}}.
    \end{align*}
    
    Thus, we have $\Pr\sbr{\lambda_A\br{u}^j\cap\lambda_A\br{v}^j\neq\emptyset}\geq 1-\e{-\mathbb{E}\sbr{Y_{uv}} }$. By summing up over all edges in $H_j$ we lower bound the expected number of edges covered by $\lambda_A^j$ as follows:
    \begin{align*}
        \mathbb{E}\sbr{\spr{\cov\br{\lambda_A^j, H_j}}} & = \sum_{uv\in E\br{H_j}}\Pr\sbr{\lambda_A\br{u}^j\cap\lambda_A\br{v}^j\neq\emptyset}  
        \geq \sum_{uv\in E\br{H_j}}\br{1-\e{-\mathbb{E}\sbr{Y_{uv}} }}
        \\& \geq \br{1-1/e}\cdot \sum_{uv\in E\br{H_j}}{\mathbb{E}\sbr{Y_{uv}}} \geq \br{1-1/e} \cdot \mathbb{E}\sbr{\spr{E\br{H_j}}}
    \end{align*}
    where the second inequality is due to the fact that $1-e^{-x}\geq (1-1/e)\cdot x$, for $x\in\sbr{0,1}$ and the last inequality is obtained by linearity of expectation.
\end{proof}

Let $I$ be the random variable denoting the number of iterations needed to cover all edges of $H$, i.e., the smallest $j$ such that $\cov\br{\lambda_A, H} = E\br{H}$ after $j$ iterations (assume that the number of iterations might eventually get larger then $T$). We show that with high probability $I \leq T$, so the algorithm returns a feasible solution.
\begin{lemma}
    \label{lem:expected-iterations}
    $\Pr\sbr{I > T} = O\br{\frac{1}{m}}$.
\end{lemma}
\begin{proof}
    Let $\alpha = 1-1/e$. By definition, $\spr{E\br{H_{j+1}}}=\spr{E\br{H_{j}}}-\spr{\cov\br{\lambda_A^j, H_j}}$. By linearity of expectation and the previous lemma, $\mathbb{E}\sbr{\spr{E\br{H_{j+1}}}} \leq \br{1-\alpha}\cdot \mathbb{E}\sbr{\spr{E\br{H_{j}}}}$. Unfolding the recurrence gives $\mathbb{E}\sbr{\spr{E\br{H_{j}}}} \leq m e^{-\alpha(j-1)}$. By Markov's inequality, $\Pr\sbr{\spr{E\br{H_{j}}}\geq 1} \leq m e^{-\alpha(j-1)}$. Since $T = \frac{2\ln m}{1-1/e} = \frac{2\ln m}{\alpha}$, setting $j = T$ gives:
    $$\Pr\sbr{I > T} = \Pr\sbr{\spr{E\br{H_{T}}}\geq 1} \leq m e^{-\alpha(T-1)} \leq m e^{-2\ln m+\alpha} = O\br{\frac{1}{m}}$$
    since $e^{\alpha}=O\br{1}$.
\end{proof}

It remains to show that the cost of the solution returned by the algorithm is at most $O\br{\log^2 n}\cdot \OPT$. We have with the following lemma:
\begin{lemma}
    With high probability, for every vertex $v\in V\br{G}$ and every iteration $j \in \sbr{T}$, $\COST\br{v, \lambda_A^j} \leq 15\ln m \cdot \OPT$.
\end{lemma}
\begin{proof}
Fix a vertex $v\in V$ and an iteration $j\in\sbr{T}$. If $v$ is cheap, then its overall cost is at most $1\leq 2\cdot\OPT$ regardless of activated interfaces. Thus, assume $v$ to be expensive, so $\sum_{i\in\lambda\br{v}}c\br{i,v} \cdot x_{i,v} \geq 1$.
By $X_{v,i}^j$ denote the indicator that is true iff $i\in \lambda_A^j$. Then, $\COST\br{v, \lambda_A^j} = \sum_{i\in\lambda\br{v}}c\br{i,v} \cdot X_{v,i}^j $. We have $\mathbb{E}\sbr{X_{v,i}^j} = 5\ln m\cdot x_{i,v}$, hence $\mathbb{E}\sbr{\COST\br{v,\lambda_A^j}} = 5\ln m\cdot \sum_{i\in\lambda\br{v}}c\br{i,v} \cdot x_{i,v}$. Applying the Chernoff bound with $\delta=2$ gives:
\begin{align*}
    \Pr\sbr{\COST\br{v, \lambda_A^j} \geq 3\cdot\mathbb{E}\sbr{\COST\br{v,\lambda_A^j}}}
    \leq \br{\frac{e^2}{27}}^{\mathbb{E}\sbr{\COST\br{v,\lambda_A^j}}}
    \leq e^{-\mathbb{E}\sbr{\COST\br{v,\lambda_A^j}}}
    \leq \frac{1}{m^5}
\end{align*}
where the last inequality uses $\mathbb{E}\sbr{\COST\br{v,\lambda_A^j}} = 5\ln m\cdot\sum_{i\in\lambda\br{v}}c\br{i,v} \cdot x_{i,v} \geq 5\ln m$. Hence $\COST\br{v,\lambda_A^j} \leq 15\ln m\cdot\sum_{i\in\lambda\br{v}} c\br{i,v} \cdot x_{i,v} \leq 15\ln m\cdot\OPT$ with probability at least $1-\frac{1}{m^5}$.

Applying the union bound over all $n \leq m + 1$ vertices and all $T \leq \frac{2\ln m}{1-1/e}$ iterations, the above holds for all $v\in V$ and $j\in\sbr{T}$ with probability at least $1 - \frac{T\cdot \br{m+1}}{m^5} = 1 - O\br{\frac{1}{m^3}}$.
\end{proof}
Consequently, summing up over all $T$ iterations, we get that with high probability:
$$\COST\br{\lambda_A} \leq \sum_{j=1}^{T}\max_{v\in V\br{G}}\brc{\COST\br{v, \lambda_A^j}} \leq T\cdot 15\ln m\cdot\OPT = \frac{30\ln^2 m}{1-1/e}\cdot\OPT\leq \frac{120\ln^2 n}{1-1/e}\cdot\OPT.$$
Combining this with the above lemmas gives the desired result.
\end{proof}

%% file: sections/conclusion.tex
\section{Conclusion} \label{sec:conclusion}

In this work, we have provided the first polylogarithmic approximation algorithms for both \connectivity and \coverage problems, which are fundamental tasks in covering multi-interface networks. Specifically, we have provided an $O\br{\log^2 n}$-approximation for the former as well as an $O\br{\log n}$-approximation for the latter, which is almost tight since \coverage is NP-hard to approximate within a factor of $o\br{\log \Delta}$, even when the input graph is a star and all costs are uniform~\cite{MinimizeTheMaximumDutyInMultiInterfaceNetworks}. This inapproximability carries on also for the \connectivity problem, since on stars, \connectivity is equivalent to \coverage. The remaining task is to close the gap between the obtained $O\br{\log^2 n}$ and $\Omega\br{\log \Delta}$ lower bound.


%% file: sections/appendix.tex
\section{Weighted Chernoff Bound Derivation} \label{sec:appendix}

\begin{theorem}[(Weighted) Chernoff bound]
    Let $X_1, X_2, \dots, X_n$ be independent random variables taking values in $\sbr{0,1}$, let $w_1, w_2, \dots, w_n$ be non-negative weights, such that for every $i\in \sbr{n}$ we have that $w_i\in \sbr{0,1}$. Let $X=\sum_{i=1}^{n} w_i\cdot X_i$ and let $\mu = \E{X}$. Then for every $\delta_1 > 0$ and $\delta_2 \in (0,1)$ we have that:
    \[
        \Pr\sbr{X \geq (1+\delta_1)\cdot \mu} \leq \br{\frac{e^{\delta_1}}{(1+\delta_1)^{1+\delta_1}}}^\mu
        \qquad
        \Pr\sbr{X \leq (1-\delta_2)\cdot \mu} \leq \br{\frac{e^{-\delta_2}}{(1-\delta_2)^{1-\delta_2}}}^\mu.
    \]
\end{theorem}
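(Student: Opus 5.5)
We prove the weighted Chernoff bound by the standard moment generating function (exponential Markov) argument. The only new ingredient compared with the unweighted case is a convexity estimate that handles each weighted summand $w_iX_i \in \sbr{0,1}$.

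\textbf{Upper tail.} Fix $t>0$. Since $e^{tX}$ is non-negative, Markov's Inequality gives
\[
\Pr\sbr{X\geq (1+\delta_1)\mu} \leq \frac{\E{e^{tX}}}{e^{t(1+\delta_1)\mu}}.
\]
By independence of the $X_i$ we have $\E{e^{tX}}=\prod_i \E{e^{t w_i X_i}}$, so it suffices to bound each factor.

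\textbf{Per-term bound.} Set $Z_i=w_iX_i$. Because $X_i\in\sbr{0,1}$ and $w_i\in\sbr{0,1}$, we have $Z_i\in\sbr{0,1}$. Convexity of $z\mapsto e^{tz}$ on $\sbr{0,1}$ gives $e^{tz}\leq 1+z(e^t-1)$. Taking expectations and using $1+a\leq e^a$,
\[
\E{e^{tZ_i}}\leq 1+\E{Z_i}(e^t-1)\leq \e{\E{Z_i}(e^t-1)}.
\]
Multiplying over $i$ and using $\sum_i \E{Z_i}=\mu$ yields $\E{e^{tX}}\leq \e{\mu(e^t-1)}$. This convexity step is the only point where the argument departs from the $0/1$ case, and it is exactly what the hypotheses $w_i\in\sbr{0,1}$ and $X_i\in\sbr{0,1}$ are for.

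\textbf{Finishing the upper tail.} Substituting $t=\ln(1+\delta_1)>0$ into the Markov bound gives
\[
\Pr\sbr{X\geq (1+\delta_1)\mu}\leq \frac{e^{\mu\delta_1}}{(1+\delta_1)^{(1+\delta_1)\mu}},
\]
which is the claimed bound.

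\textbf{Lower tail.} The argument is symmetric. For $t>0$, Markov's Inequality applied to $e^{-tX}$ gives
\[
\Pr\sbr{X\leq (1-\delta_2)\mu}\leq \E{e^{-tX}}\,e^{t(1-\delta_2)\mu}.
\]
The same convexity estimate, now for $z\mapsto e^{-tz}$ on $\sbr{0,1}$, gives $\E{e^{-tZ_i}}\leq \e{\E{Z_i}(e^{-t}-1)}$, and hence $\E{e^{-tX}}\leq \e{\mu(e^{-t}-1)}$. Choosing $t=-\ln(1-\delta_2)$, which is positive because $\delta_2\in(0,1)$, gives
\[
\Pr\sbr{X\leq (1-\delta_2)\mu}\leq \frac{e^{-\delta_2\mu}}{(1-\delta_2)^{(1-\delta_2)\mu}},
\]
which is the claimed lower-tail bound.
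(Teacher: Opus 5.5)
Your proof is correct and follows the paper's argument step for step. Both use exponential Markov, factorization by independence, the convexity bound $e^{tz}\le 1+z(e^t-1)$ for $z=w_iX_i\in[0,1]$ followed by $1+a\le e^a$, and the choices $t=\ln(1+\delta_1)$ and $t=-\ln(1-\delta_2)$; the paper obtains these values by minimizing the exponent, whereas you substitute them directly.
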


\begin{proof}
We first prove the upper-tail bound for $\delta_1>0$. For any $\lambda>0$, Markov's inequality gives:
\[
\Pr\sbr{X \ge \br{1+\delta_1}\cdot\mu}
= \Pr\sbr{e^{\lambda X} \ge e^{\lambda\cdot\br{1+\delta_1}\cdot\mu}} \le \frac{\E{e^{\lambda X}}}{e^{\lambda\cdot\br{1+\delta_1}\cdot\mu}}.
\]
By independence, we have $
\E{e^{\lambda \cdot X}}
= \prod_{i=1}^n \E{e^{\lambda\cdot w_i\cdot X_i}}$.
Fix $i\in\sbr{n}$. Since $w_i\cdot X_i\in\sbr{0,1}$, by convexity of $y\mapsto e^{\lambda \cdot y}$ on $\sbr{0,1}$, for every $y\in\sbr{0,1}$,
$e^{\lambda \cdot y}
\le \br{1-y}\cdot e^0 + y\cdot e^{\lambda}
= 1 + \br{e^{\lambda}-1}\cdot y$.
Substituting $y=w_i\cdot X_i$, taking expectation, and using $1+t\le e^t$, we obtain:
\[
\E{e^{\lambda\cdot w_i\cdot X_i}}
\le 1 + \br{e^{\lambda}-1}\cdot w_i\cdot\E{X_i} \le \exp\br{\br{e^{\lambda}-1}\cdot w_i\cdot\E{X_i}}.
\]
Hence:
\[
\E{e^{\lambda X}}
\le \exp\br{\br{e^{\lambda}-1}\cdot\sum_{i=1}^n w_i\cdot\E{X_i}} = \exp\br{\br{e^{\lambda}-1}\cdot\mu}.
\]
Therefore,
$\Pr\sbr{X \ge \br{1+\delta_1}\cdot\mu}
\le \exp\br{\mu\cdot\br{e^{\lambda}-1-\lambda\cdot\br{1+\delta_1}}}$.
The function $f\br{\lambda}=e^{\lambda}-1-\lambda\cdot \br{1+\delta_1}$ is minimized at $\lambda=\ln\br{1+\delta_1}$, and so:
\[
\Pr\sbr{X \ge \br{1+\delta_1}\cdot\mu}
\le \exp\br{-\mu\cdot\br{\br{1+\delta_1}\cdot\ln\br{1+\delta_1}-\delta_1}} = \br{\frac{e^{\delta_1}}{\br{1+\delta_1}^{1+\delta_1}}}^{\mu}.
\]

Now let $\delta_2\in(0,1)$. For any $t>0$, again by Markov's inequality:
\[
\Pr\sbr{X \le \br{1-\delta_2}\cdot\mu}
= \Pr\sbr{e^{-tX} \ge e^{-t\cdot\br{1-\delta_2}\cdot\mu}} \le \frac{\E{e^{-t\cdot X}}}{e^{-t\cdot\br{1-\delta_2}\cdot\mu}}.
\]
Applying the same mgf estimate with $\lambda=-t<0$ yields $
\E{e^{-t\cdot X}}
\le \exp\br{\br{e^{-t}-1}\cdot\mu}$,
so:
\[
\Pr\sbr{X \le \br{1-\delta_2}\cdot\mu}
\le \exp\br{\mu\cdot\br{e^{-t}-1+t\cdot\br{1-\delta_2}}}.
\]
The right-hand side is minimized at $t=-\ln\br{1-\delta_2}$, which gives:
\[
\Pr\sbr{X \le \br{1-\delta_2}\cdot\mu}
\le \exp\br{-\mu\cdot\br{\delta_2+\br{1-\delta_2}\cdot\ln\br{1-\delta_2}}} = \br{\frac{e^{-\delta_2}}{\br{1-\delta_2}^{1-\delta_2}}}^{\mu}.
\]
\end{proof}

\section{Preprocessing Details} \label{sec:appendix-preprocessing}

In this section, we provide the full preprocessing routine used for both the \coverage and \connectivity problems.

\subparagraph*{Scaling of the costs}
Firstly, we produce a polynomial number of instances, in such a way that for at least one of them we have a good lower bound on $\OPT$. For convenience, we allow costs to be rational numbers and scale them appropriately so that all of them lie in the interval $\sbr{0,1}$. To do so, by testing consecutive powers of $2$, we guess the smallest number $b$, such that the costliest activated interface among all vertices in the optimal solution has cost at most $2^b$. Let $C=\cl{\log\br{\max_{v\in V\br{G}, i\in \lambda\br{v}} \brc{c(i,v)}}}$. The number of the guesses is $C+1$, which is polynomial in the input size. If either some edge cannot be covered by interfaces of cost at most $2^b$, or there is no vertex $v$ and interface $i\in \lambda\br{v}$ such that $c\br{i,v}\in\sbr{1/2,1}$, then we discard such a guess. Otherwise, for each vertex $v\in V$, we temporarily discard from $\lambda\br{v}$ all interfaces with cost larger than $2^b$. Let $C_b=\max_{v\in V\br{G}, i\in \lambda\br{v}} \brc{c(i,v)}$ in this newly created instance. We normalize the costs of remaining interfaces by dividing them by $C_b$, so that the largest cost is exactly $1$. For the right guess of $b$, this gives $\OPT\geq C_b\geq 1/2$.

\subparagraph*{Cheap and expensive vertices}
Having established the scaling of costs, we now consider two types of vertices with respect to the new costs. For a vertex $v\in V$, if $\sum_{i\in \lambda\br{v}} c\br{i,v}\leq 1$, then we can safely assume that $v$ activates all interfaces in $\lambda\br{v}$, since this costs at most $1$ and is thus upper bounded by $2\cdot\OPT$. We call such vertices \emph{cheap}. The remaining vertices are \emph{expensive}. For every expensive vertex $v$, we can assume that for any output of the algorithm, $\COST\br{\lambda_A,v}\geq 1$, since enforcing this changes the cost by a factor of at most $2$ and does not affect asymptotic guarantees.

\subparagraph*{Finding the solution}
Since our algorithms are randomized, we also need to control the probability of failure. We ensure that, with high probability, for all values of $b$ we find a good solution. To do so, for each guess of $b$, we repeat the randomized procedure $K=\cl{\log_m\br{C}+1}$ times (which is polynomial). The algorithm is as follows.

\begin{algorithm}[H]
\caption{Preprocessing algorithm} \label{alg:preprocessing}
$\lambda_A(v) \gets \lambda(v)$ for every $v\in V$.

\For{$b\in \brc{0, \dots, C}$}
{
    For the remainder of this iteration, discard interfaces with cost larger than $2^b$ and divide all remaining costs by $\max_{v\in V\br{G}, i\in \lambda\br{v}} \brc{c(i,v)}$.

    \If{there is no solution for this new instance or there is no vertex $v$ and interface $i\in \lambda\br{v}$ such that $c\br{i, v}\in \sbr{1/2, 1}$}
    {
        \textbf{continue}
    }

    \For{$j\in \sbr{K}$}
    {
        $\lambda_A' \gets$ the output of the randomized procedure for the scaled-cost instance.

        \If{$\COST\br{\lambda_A'} < \COST\br{\lambda_A}$ and $\lambda_A'$ is a covering/connecting assignment}
        {
            $\lambda_A \gets \lambda_A'$.
        }
    }
}

\Return{$\{\lambda_A(v)\}_{v\in V}$.}
\end{algorithm}

Assuming that the failure probability of the randomized procedure is at most $O\br{1/m}$, for each fixed $b$ the probability of failure after $K$ repetitions is at most $O\br{1/m^K}=O\br{1/(mC)}$. Since there are $C$ guesses for $b$, a union bound implies that the probability that there exists some value of $b$ for which the algorithm fails is at most $O\br{1/m}$. In particular, this also holds for the correct guess of $b$.